\newtheorem{theorem}{Theorem}
\theoremstyle{notation}
\theoremstyle{remark}
\begin{document}

\preprint{APS/123-QED}

\title{
Locked entropy in partially coherent optical fields}

\author{Mitchell Harling}
\affiliation{PROBE Lab, School of Engineering, Brown University, Providence, RI 02912, USA}%
\author{Varun A. Kelkar}
\affiliation{Department of Electrical and Computer Eng., University of Illinois at Urbana-Champaign, Urbana, IL 61801, USA}%
\affiliation{Currently at Algorithmic Systems Group, Analog Garage, Analog Devices, Inc., Boston, MA 02110, USA}
\author{Kimani C. Toussaint, Jr.}
\affiliation{PROBE Lab, School of Engineering, Brown University, Providence, RI 02912, USA}%
\author{Ayman F. Abouraddy}%
\affiliation{CREOL, The College of Optics \& Photonics, University of Central Florida, Orlando, FL 32816, USA}%

\date{\today}

\begin{abstract}
We introduce a taxonomy for partially coherent optical fields spanning multiple degrees of freedom (DoFs) based on the rank of the associated coherence matrix (the number of non-zero eigenvalues). When DoFs comprise two spatial modes and polarization, a fourfold classification emerges, with rank-1 fields corresponding to fully coherent fields. We demonstrate theoretically and confirm experimentally that these classes have heretofore unrecognized different properties. Specifically, whereas rank-2 fields can always be rendered separable with respect to its DoFs via a unitary transformation, rank-3 fields are always non-separable. Consequently, the entropy for a rank-2 field can always be concentrated into a single DoF (thus ridding the other DoF of statistical fluctuations), whereas some entropy is always `locked' in one DoF of a rank-3 field.
\end{abstract}

\maketitle

The study of optical coherence and the statistical fluctuations in optical fields extends back to the pioneering work of Zernike \cite{Zernike38P}, and subsequently reached maturation in the work of Wolf and Mandel \cite{Mandel65RMP,Mandel95Book,Wolf07Book}. Recently, new insights into optical coherence have been brought to light \cite{Qian11OL,Kagalwala13NP} by exploiting the mathematical correspondence between the coherence matrix for classical optical fields involving multiple degrees of freedom (DoFs) \cite{Gori98OL,Gori06OL,Gori07OL} and the density operator representing multipartite quantum mechanical states. This correspondence has led to the coinage of the term `classical entanglement' \cite{Kagalwala13NP,Spreeuw1998FP,Spreeuw2001PRA,Aiello2015NJP,FORBES201999,Konrad19CP} to describe optical fields that are \textit{not} separable with respect to their DoFs, in analogy with quantum entanglement that is intrinsic to non-separable multipartite quantum states. The concept of classical entanglement has helped solve problems with regards to Mueller matrices \cite{Simon10PRL}, determine the maximum achievable Young double-slit interference visibility \cite{Abouraddy17OE}, and enable the characterization of quantum optical communications channels \cite{Ndagano17NP}, among many other applications \cite{Otte18LSA,Mamani18JB,kondakci2019classical,Toninelli19AOP,YaoLi20APL,Shen21LSA,Shen21PRR,Hall22JOSAA,Aiello22NJP}.

The study of classical entanglement in optical fields is enriched by the possibility of implementing inter-DoF (or global) unitary transformations (`unitaries' henceforth for brevity \cite{Abouraddy17OE,Halder21OL}), including entangling and disentangling unitaries; e.g., a spatial light modulator can entangle or disentangle polarization and spatial modes \cite{Kagalwala17NC}. This feature is central to the recent demonstration of entropy swapping \cite{Okoro17Optica,harling2022reversible,Harling23JO}, which refers to the reversible reallocation of statistical fluctuations from one DoF to another in a partially coherent field. For example, starting with a \textit{polarized} but spatially \textit{incoherent} field (the entropy is confined to the spatial DoF), a global unitary can convert the field to one that is \textit{unpolarized} but spatially \textit{coherent} (the entropy has been swapped to the polarization DoF with no loss of energy). A similar approach enables entropy concentration, whereby the entropy shared among the DoFs can be optimally transferred into a single DoF via a unitary \cite{harling2022reversible}.

Here we uncover a surprising feature of partially coherent optical fields that places a constraint on entropy concentration under arbitrary global unitaries \cite{Abouraddy17OE,Okoro17Optica}. For concreteness, we examine a canonical optical field model having two binary DoFs, and introduce a fourfold taxonomy based on the \textit{coherence rank} of the associated $4\!\times\!4$ coherence matrix, which corresponds to the number of its non-zero eigenvalues (from 1 to 4). While the rank-1 class embraces all coherent fields, rank-2 through rank-4 classes comprise partially coherent fields. We find that fields of different ranks have altogether different characteristics that have not been investigated to date. Specifically, we find that the potential for concentrating the field entropy into a single DoF depends crucially on the rank. Most conspicuously, the entropy of rank-2 fields -- no matter how \textit{high} -- can \textit{always} be concentrated into a single DoF, thereby leaving the other DoF free of statistical fluctuations [Fig.~\ref{fig:Conversion_concept}]. Indeed, there always exists a global unitary that renders the field separable with respect to its DoFs, with all the initial entropy concentrated into a single DoF. In stark contrast, it is \textit{impossible} to concentrate all the entropy of rank-3 fields -- no matter how \textit{low} -- into one DoF, and residual fluctuations must be retained by the other DoF, which we call `locked entropy' [Fig.~\ref{fig:Conversion_concept}]. This stems from the fact that rank-3 fields possess a \textit{fundamentally non-separable} structure that cannot be eliminated unitarily. We demonstrate these effects experimentally using optical fields defined by polarization and two spatial modes as the binary DoFs of interest. These results open a new window onto understanding the dynamics of optical coherence upon traversing optical systems or media that couple multiple DoFs, and suggests new applications that may exploit the coherence rank in optical imaging and communications.

\textbf{Vector-space formulation of partially coherent optical fields.} The most general state of an optical field characterized by a binary DoF is described by a $2\times2$ coherence matrix. The polarization coherence matrix is $\mathbf{G}_{\mathrm{p}}\!=\!\left(\!\begin{array}{cc}
G^{\mathrm{HH}}&G^{\mathrm{HV}}\\
G^{\mathrm{VH}}&G^{\mathrm{VV}}\end{array}\!\right)$, where $G^{ij}\!=\!\langle E^{i}(E^{j})^{*}\rangle$, $i,j\!=\!\mathrm{H},\mathrm{V}$, and $E^{i}$ is a scalar field component at a point. Similarly, the spatial coherence matrix at two points $a$ and $b$ in a scalar field is
$\mathbf{G}_{\mathrm{s}}\!=\!\left(\!\begin{array}{cc}
G_{aa}&G_{ab}\\G_{ba}&G_{bb}\end{array}\!\right)$, where $G_{kl}\!=\!\langle E_{k}E_{l}^{*}\rangle$, $k,l\!=\!a,b$, and $E_{k}$ is the scalar field at a point. The polarization entropy is $S_{\mathrm{p}}\!=\!-\lambda_{1}\log_{2}\lambda_{1}-\lambda_{2}\log_{2}\lambda_{2}$, where $\lambda_{1}$ and $\lambda_{2}$ are the eigenvalues of $\mathbf{G}_{\mathrm{p}}$; the spatial entropy $S_{\mathrm{s}}$ associated with $\mathbf{G}_{\mathrm{s}}$ is similarly defined. In general $0\!\leq\!S_{\mathrm{p}},S_{\mathrm{s}}\!\leq\!1$, with $S_{\mathrm{p}},S_{\mathrm{s}}\!=\!0$ in the case of fully coherent fields (no statistical fluctuations) \cite{Brosseau06PO}. The maximum entropy is 1~bit when the field is unpolarized or spatially incoherent $\mathbf{G}_{\mathrm{p}},\mathbf{G}_{\mathrm{s}}\!=\!\tfrac{1}{2}\mathcal{I}$ (where $\mathcal{I}$ is the identity matrix).

Taking \textit{both} DoFs (i.e., two points in a vector field), the first-order coherence is described by a $4\!\times\!4$ coherence matrix $\mathbf{G}$ \cite{Gori06OL,Kagalwala13NP,Abouraddy17OE},
\begin{equation}
\mathbf{G}=\left(\begin{array}{cccc}
G_{aa}^{\mathrm{HH}}&G_{aa}^{\mathrm{HV}}&G_{ab}^{\mathrm{HH}}&G_{ab}^{\mathrm{HV}}\vspace{0.1cm}\\
G_{aa}^{\mathrm{VH}}&G_{aa}^{\mathrm{VV}}&G_{ab}^{\mathrm{VH}}&G_{ab}^{\mathrm{VV}}\vspace{0.1cm}\\
G_{ba}^{\mathrm{HH}}&G_{ba}^{\mathrm{HV}}&G_{bb}^{\mathrm{HH}}&G_{bb}^{\mathrm{HV}}\vspace{0.1cm}\\
G_{ba}^{\mathrm{VH}}&G_{ba}^{\mathrm{VV}}&G_{bb}^{\mathrm{VH}}&G_{bb}^{\mathrm{VV}}
\end{array}\right),
\end{equation}
where $G_{kl}^{ij}\!=\!\langle E_{k}^{i}(E_{l}^{j})^{*}\rangle$, $i,j\!=\!\mathrm{H},\mathrm{V}$, and $k,l\!=\!a,b$. The coherence matrices $\mathbf{G}$, $\mathbf{G}_{\mathrm{s}}$, and $\mathbf{G}_{\mathrm{p}}$ are all Hermitian, positive semi-definite, unity-trace matrices. A $4\!\times\!4$ unitary $\hat{U}$ spanning both DoFs \cite{Abouraddy17OE} diagonalizes $\mathbf{G}$: $\mathbf{G}_{\mathrm{D}}\!=\!\hat{U}\mathbf{G}\hat{U}^{\dag}\!=\!\mathrm{diag}\{\lambda_{1},\lambda_{2},\lambda_{3},\lambda_{4}\}$, with $\sum_{j}\lambda_{j}\!=\!1$, and the field can carry up to 2 bits of entropy $S\!=\!-\sum_{j=1}^{4}\lambda_{j}\log_{2}\lambda_{j}$, where $0\!\leq \!S\!\leq\!2$ and $\mathrm{diag}\{\cdot\}$ refers to a diagonal matrix with the listed elements along the diagonal. If, and only if, $\lambda_{1}\lambda_{4}\!=\!\lambda_{2}\lambda_{3}$ can $\mathbf{G}_{\mathrm{D}}$ be separated into a direct product with respect to the two DoFs, $\mathbf{G}_{\mathrm{D}}\!=\!\mathrm{diag}\{\psi_{a},\psi_{b}\}\otimes\mathrm{diag}\{\gamma^{\mathrm{H}},\gamma^{\mathrm{V}}\}$, where each $2\!\times\!2$ coherence matrix corresponds to one DoF \cite{Abouraddy01PRA}. The condition $\lambda_{1}\lambda_{4}\!=\!\lambda_{2}\lambda_{3}$ therefore delineates optical fields that can -- in principle -- be rendered separable with respect to their DoFs via unitaries.

We introduce the reduced coherence matrices that result from `tracing out' one DoF from $\mathbf{G}$: the reduced spatial coherence matrix $\mathbf{G}^{\mathrm{red.}}_{\mathrm{s}}$ after tracing out polarization, and the reduced polarization coherence matrix $\mathbf{G}^{\mathrm{red.}}_{\mathrm{p}}$ after tracing over space. We define entropies $S_{\mathrm{s}}$ and $S_{\mathrm{p}}$ for $\mathbf{G}^{\mathrm{red.}}_{\mathrm{s}}$ and $\mathbf{G}^{\mathrm{red.}}_{\mathrm{p}}$, respectively; in general, $S\!\leq\!S_{\mathrm{s}}+S_{\mathrm{p}}$, with equality occurring only when the field is separable. Crucially, whereas $S$ is invariant with respect to global unitaries, $S_{\mathrm{s}}$ and $S_{\mathrm{p}}$ are \textit{not}. Indeed, whereas $\mathbf{G}$ suffices to completely identify the field coherence, these reduced matrices do \textit{not} \cite{Kagalwala13NP,Kagalwala15SR,Abouraddy17OE,Okoro17Optica}.

\begin{figure}[t!]
\centering
\includegraphics[width=8.6cm]{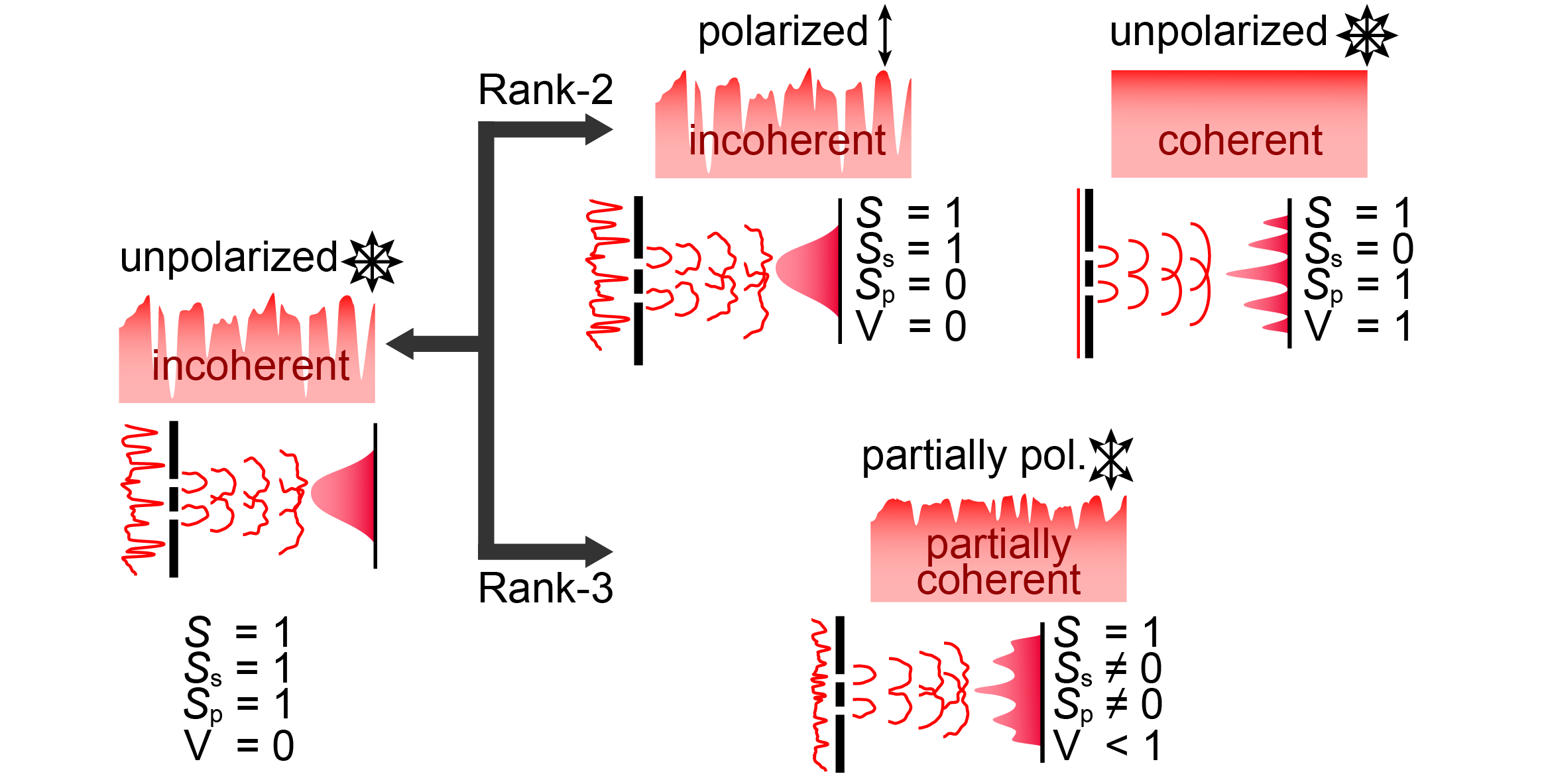}
\caption{Starting with a non-separable field with 1 bit of entropy ($S\!=\!1$, left) that is unpolarized $S_{\mathrm{p}}\!=\!1$ and spatially incoherent $S_{\mathrm{s}}\!=\!1$, a unitary can reversibly convert it into one of two forms depending on the rank of $\mathbf{G}$. For a rank-2 field, the entropy can be \textit{always} fully concentrated into one DoF, leaving the other DoF free of statistical fluctuations. For a rank-3 field, entropy can \textit{never} be fully concentrated in one DoF. There always remains `locked entropy' in the other DoF.}
\label{fig:Conversion_concept}
\end{figure}

\textbf{Coherence rank and entropy concentration.} We classify these optical fields into four families according to their \textit{coherence rank}, defined as the number of non-zero eigenvalues of $\mathbf{G}$. Rank-1 fields, $\{\lambda\}\!=\!\{1,0,0,0\}$, comprise fully coherent fields, $S\!=\!0$ (no statistical fluctuations). It is \textit{always} possible to render rank-1 fields separable via a unitary: $\mathbf{G}\!\rightarrow\!\mathbf{G}_{\mathrm{D}}\!=\!\mathrm{diag}\{1,0\}\otimes\mathrm{diag}\{1,0\}$, whereupon both DoFs are fully coherent.

Partially coherent rank-2 fields, $\{\lambda\}\!=\!\{\lambda_{1},\lambda_{2},0,0\}$, with entropy in the range $0\!<\!S\!\leq\!1$, can \textit{always} be transformed unitarily into the separable form: $\mathbf{G}_{\mathrm{D}}\!=\!\mathrm{diag}\{1,0\}\otimes\mathrm{diag}\{\lambda_{1},\lambda_{2}\}$. This corresponds to a partially polarized field that is fully coherent spatially ($S_{\mathrm{p}}\!=\!S$ and $S_{\mathrm{s}}\!=\!0$). Alternatively, the field can be converted into a fully polarized field that is partially coherent spatially ($S_{\mathrm{p}}\!=\!0$ and $S_{\mathrm{s}}\!=\!S$). In general, the entropy of a rank-2 field is shared between the two DoFs. Nevertheless, even in its highest-entropy state $S\!=\!1$,  $\{\lambda\}\!=\!\{\tfrac{1}{2},\tfrac{1}{2},0,0\}$, such fields can always be rendered separable such that one DoF is fully coherent (ridding it completely from statistical fluctuations), with the 1 bit of field entropy concentrated in the other DoF \cite{Okoro17Optica,harling2022reversible,Harling23JO}; see Fig.~\ref{fig:Conversion_concept}.

In stark contrast, the coherence matrices associated with rank-3 fields, $\{\lambda\}\!=\!\{\lambda_{1},\lambda_{2},\lambda_{3},0\}$, whose entropy is in the range $0\!<\!S\!\leq\!1.585$, can\textit{not} be expressed as a direct product ($\lambda_{1}\lambda_{4}\!=\!0\!\neq\!\lambda_{2}\lambda_{3}$); that is, rank-3 fields are \textit{never} separable with respect to their DoFs. This fundamental non-separability is independent of the values $\{\lambda\}$ and is solely a consequence of the rank of $\mathbf{G}$. This hitherto unrecognized feature has important consequences for entropy concentration: it prevents ridding either DoF altogether from statistical fluctuations. Indeed, after concentrating the entropy into one DoF, a residual amount of entropy is retained that we refer to as \textit{locked entropy}. The entropy in a rank-3 field must always be shared between the DoFs no matter how low $S$ is. Even when $S\!<\!1$, it is impossible to realize the condition $S_{\mathrm{p}}\!=\!S$ and $S_{\mathrm{s}}\!=\!0$ (or $S_{\mathrm{p}}\!=\!0$ and $S_{\mathrm{s}}\!=\!S$) unitarily, which is attainable for rank-2 fields \textit{of the same entropy} [Fig.~\ref{fig:Conversion_concept}]. Furthermore, when $S\!>\!1$ one cannot concentrate 1 bit of entropy in one of the DoFs. Defining the function $f(x)\!=\!-\!x\log_{2}{x}\!-\!(1\!-\!x)\log_{2}{(1\!-\!x)}$, the minimum entropy that is locked in one DoF is $S_{\mathrm{min}}\!=\!f(\lambda_{1}\!+\!\lambda_{2})$, in which case the entropy concentrated into the other DoF is $S_{\mathrm{max}}\!=\!f(\lambda_{1}\!+\!\lambda_{3})$.

Rank-4 fields, $\{\lambda\}\!=\!\{\lambda_{1},\lambda_{2},\lambda_{3},\lambda_{4}\}$, can sometimes be unitarily rendered separable with respect to their DoFs depending on the eigenvalues, and they thus share the properties of rank-2 or rank-3 fields. We will not examine rank-4 fields here and focus instead on delineating the characteristics of rank-2 and rank-3 fields.

\begin{figure}[t!]
\centering
\includegraphics[width=8.6cm]{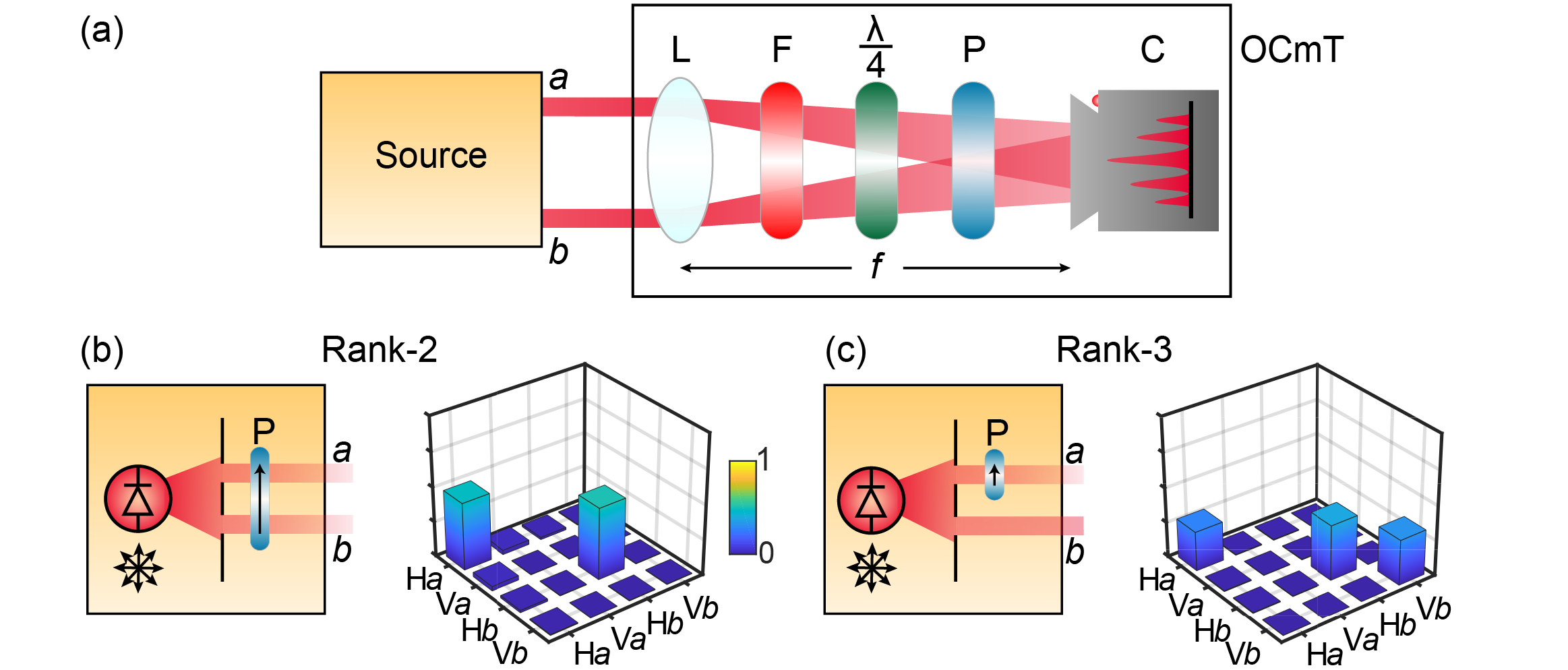}
\caption{(a) Schematic of the OCmT measurement scheme used to measure coherence matrices; L: spherical lens (focal length $f\!=\!30$~cm); F: spectral filter; $\tfrac{\lambda}{4}$: quarter wave plate; P: linear polarizer; C: CMOS camera. (b) Source preparation for a rank-2 field and the measured $4\times4$ coherence matrix $\mathbf{G}$. (c) Same as (b) for a rank-3 field.}
\label{fig:setup}
\end{figure}

\textbf{Experiment.} We first prepare and characterize representative rank-2 and rank-3 fields [Fig.~\ref{fig:setup}]. Starting from unpolarized, spatially incoherent light from an LED (wavelength 625~nm), we select two spatial modes using slits at points $a$ and $b$ that are sufficiently separated to guarantee mutual incoherence [Fig.~\ref{fig:setup}(a)]. For a rank-2 field $\mathbf{G}\!=\!\tfrac{1}{2}\mathrm{diag}\{1,0,1,0\}$, the source configuration along with the measured coherence matrix are shown in Fig.~\ref{fig:setup}(b), and the corresponding results for the rank-3 field with $\mathbf{G}\!=\!\tfrac{1}{3}\mathrm{diag}\{1,0,1,1\}$ are shown in Fig.~\ref{fig:setup}(c). The rank-2 field is prepared by placing a polarizer at both $a$ and $b$, yielding $S\!=\!1$: the field is polarized $S_{\mathrm{p}}\!=\!0$ but spatially incoherent $S_{\mathrm{s}}\!=\!1$. The rank-3 field is prepared by placing a linear polarizer at $b$ only (the field at $a$ remains unpolarized) to yield $S\!=\!1.585$: the field is partially polarized and partially coherent spatially. Throughout, $\mathbf{G}$ is reconstructed via optical coherence matrix tomography (OCmT) [Fig.~\ref{fig:setup}(a)], which extends to optical fields with multiple DoFs \cite{Abouraddy14OL,Kagalwala15SR} the analogous procedure of quantum state tomography \cite{Wooters90CEPI,James01PRA1,Abouraddy02OptComm}; see Supplementary for further experimental details.

\begin{figure}[t!]
\centering
\includegraphics[width=8.6cm]{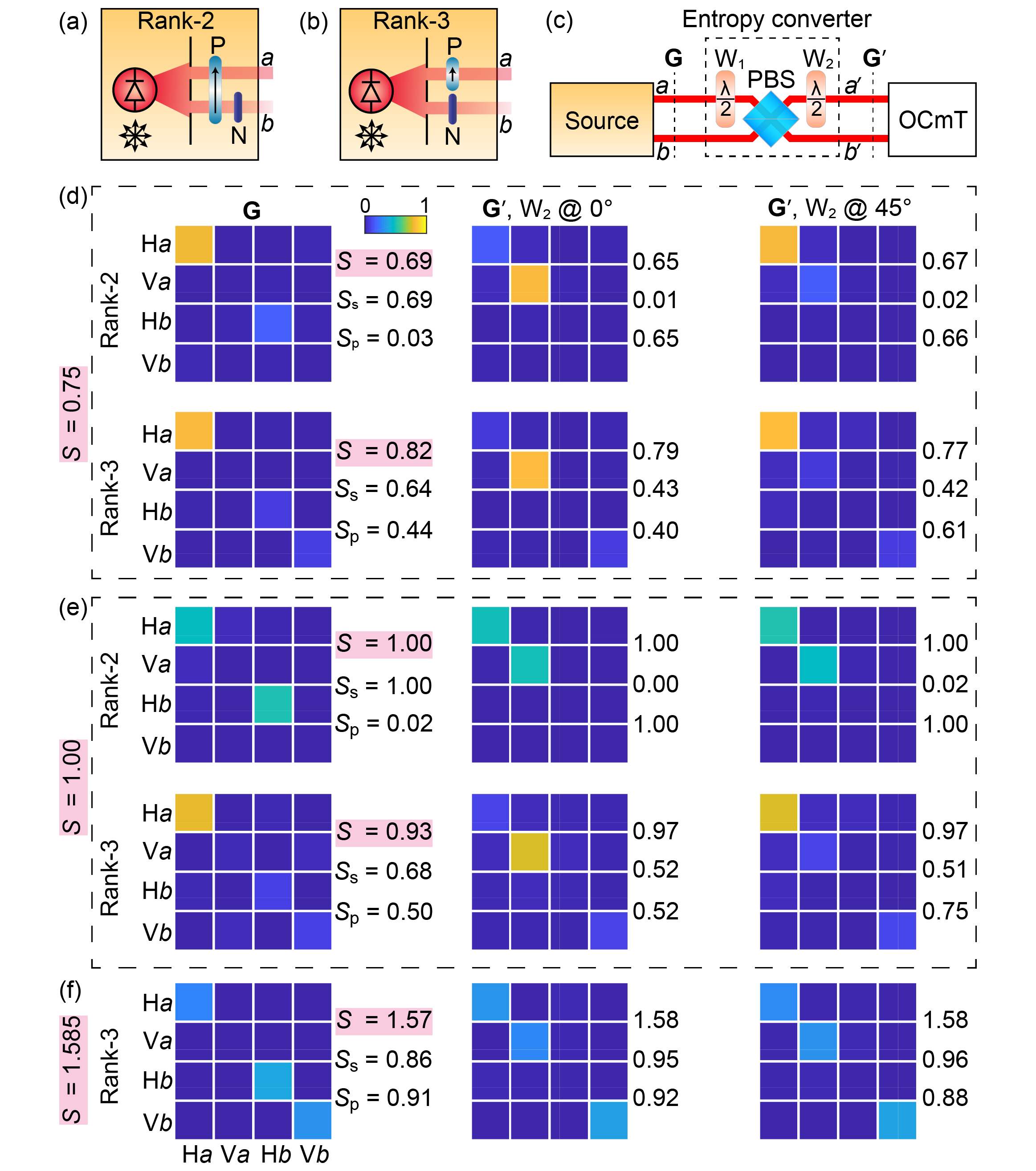}
\caption{Unitary entropy conversion for rank-2 and rank-3 fields. (a) Source configurations for rank-2 and (b) rank-3 fields. P: Linear polarizer oriented along H; N: neutral density filter. (c) Setup for entropy conversion. W: Half-wave plate; PBS: polarizing beam splitter. (d-f) From left to right: $\mathbf{G}$ reconstructed before the entropy converter; $\mathbf{G}'$ after the entropy converter with W$_{2}$ oriented at $0^{\circ}$; and $\mathbf{G}'$ with W$_{2}$ at $45^{\circ}$. All matrices are measurements, and the fidelity throughout was $>\!98\%$ with respect to theoretical expectations (see Supplementary). (d) Rank-2 and rank-3 fields with $S\!\approx\!0.75$; (e) same as (d) for $S\!\approx\!1$; and (f) rank-3 field with $S\!\approx\!1.585$.}
\label{fig:primes}
\end{figure}

The impact of the coherence rank on the limits of entropy concentration is illustrated in Fig.~\ref{fig:primes}. We consider iso-entropy rank-2 [Fig.~\ref{fig:primes}(a)] and rank-3 [Fig.~\ref{fig:primes}(b)] fields. We make use of an entropy converter that unitarily couples the two DoFs [Fig.~\ref{fig:primes}(c)], which comprises a half-wave plate (HWP) W$_{1}$ in path $a$ oriented at $45^{\circ}$ with respect to H (H$\rightarrow$V, V$\rightarrow$H), a polarizing beam splitter (PBS) that couples modes $a$ and $b$ and produces modes $a'$ and $b'$, followed by a HWP W$_{2}$ in mode $a'$ in one of two orientations: at $0^{\circ}$ with H (H$\rightarrow$H and V$\rightarrow-$V), and at $45^{\circ}$ with H (H$\rightarrow$V, V$\rightarrow$H). The first orientation minimizes the entropy in the spatial DoF (entropy concentration), while the second orientation swaps the entropy of the spatial and polarization DoFs (entropy swapping).  

Either binary DoF (polarization or spatial modes) can support up to 1~bit of entropy. We thus first prepare rank-2 and rank-3 fields with $S\!=\!0.75$ [Fig.~\ref{fig:primes}(d)]. For the rank-2 field, the entire entropy can be concentrated in the spatial DoF, $S_{\mathrm{s}}\!=\!0.75$ (partially coherent spatially) and $S_{\mathrm{p}}\!=\!0$ (fully polarized). Using the first setting for W$_{2}$, the entropy converter minimizes the spatial entropy: $S_{\mathrm{s}}\!\rightarrow\!0$ (spatially coherent) and $S_{\mathrm{p}}\!\rightarrow\!0.75$ (partially polarized). The second setting for W$_{2}$ swaps the entropy between the DoFs, which yields here the same result as that of entropy concentration with the first setting.

The corresponding results for the rank-3 field are entirely in contrast to those for the iso-entropy $S\!=\!0.75$ rank-2 field. The rank-3 source configuration yields theoretical values of $S_{\mathrm{s}}\!=\!0.6$ (partially coherent spatially) and $S_{\mathrm{p}}\!=\!0.38$ (partially polarized); see Supplementary. The first setting minimizes the spatial entropy but can\textit{not} concentrate all the entropy into the polarization DoF; rather, some entropy remains locked in the spatial DoF $S_{\mathrm{s}}\!\rightarrow\!0.38$. The second setting for the entropy converter swaps the entropy between the DoFs: $S_{\mathrm{s}}\!\rightarrow\!0.38$ (partially coherent spatially) and $S_{\mathrm{p}}\!\rightarrow\!0.6$ (partially polarized). Similar results are obtained when the initial field has a total of 1 bit of entropy, $S\!=\!1$ [Fig.~\ref{fig:primes}(e)]. Whereas the entire entropy can be concentrated in either DoF in the case of a rank-2 field, this cannot be achieved for the iso-entropy rank-3 field, and some entropy must remain locked in either DoF. Finally, the entropy of rank-3 can exceed 1 bit (whereas that of rank-2 fields cannot). In Fig.~\ref{fig:primes}(f) we repeat the measurements with a maximum-entropy rank-3 field, $S\!=\!1.585$. Here the locked entropy in the spatial DoF is $S_{\mathrm{s}}\!=\!f(\tfrac{2}{3})\!=\!0.92$. 

The field rank can be identified by reconstructing $\mathbf{G}$, as shown in Fig.~\ref{fig:primes}. Nevertheless, information concerning the coherence rank can be deduced by observing the visibility of the spatial interference fringes produced by the field when the fields at $a$ and $b$ are superposed after a polarization projection. Two theorems (see Supplementary for proofs) help establish a strategy for this approach.

\begin{theorem}\label{thm:rank3}
For a vector optical field supported on two spatial points with a coherence matrix $\mathbf{G}$, if there exists a polarization projection along vector $\mathbf{P}$ along which the field is spatially coherent (i.e., it can produce spatial interference fringes with $100\%$ visibility), then $\mathcal{R}(\mathbf{G})\leq3$. 
\end{theorem}

\begin{theorem}\label{thm:rank2}
For a vector optical field supported on two points with a coherence matrix $\mathbf{G}$, if there exist two orthogonal polarization projections $\mathbf{P}$ and $\mathbf{Q}$ along which the field is spatially coherent (i.e., it can produce $100\%$-visibility spatial interference fringes), then $\mathcal{R}(\mathbf{G})\!\leq\!2$. 
\end{theorem}

\begin{figure}[t!]
\centering
\includegraphics[width=8.6cm]{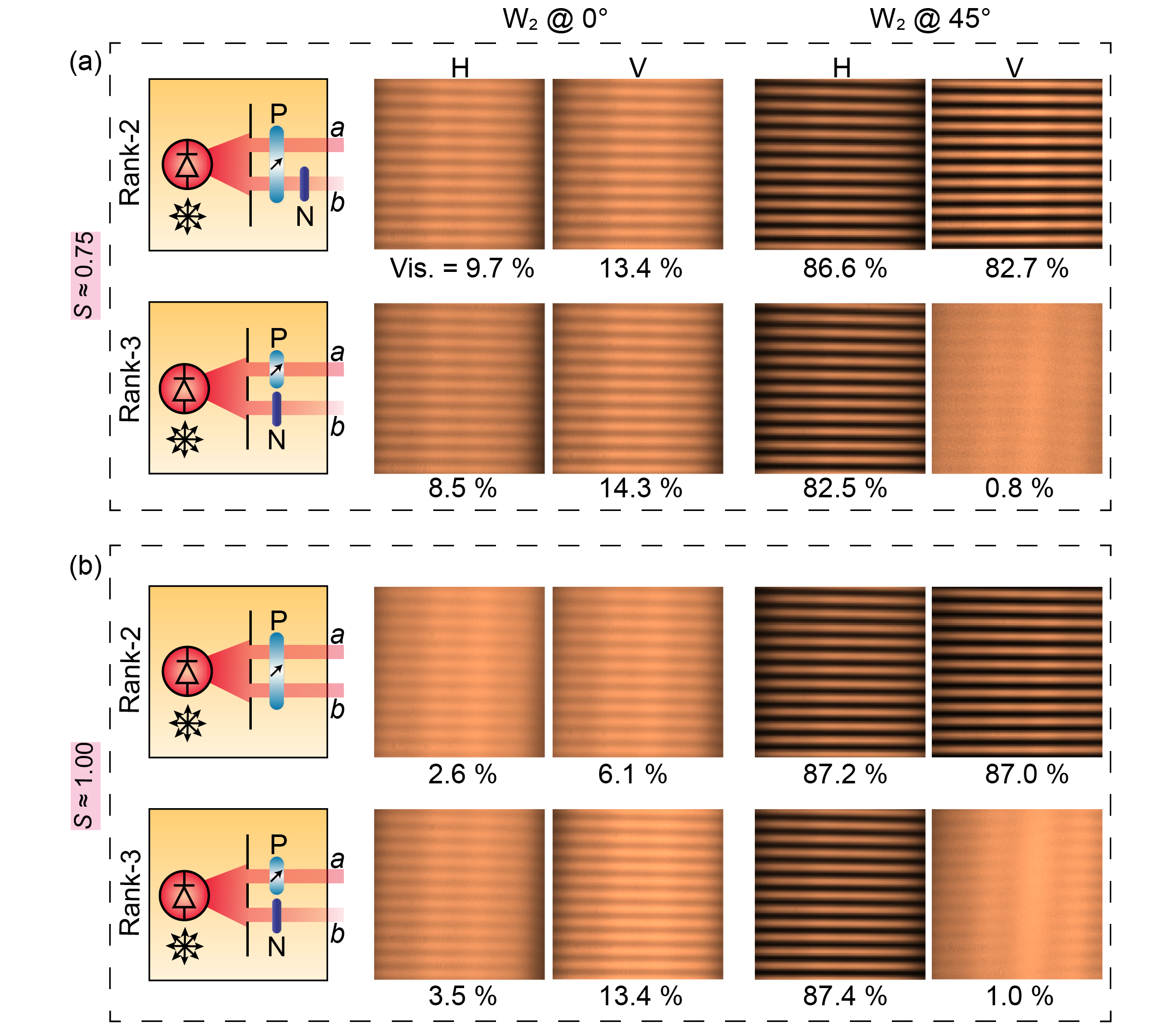}
\caption{Identifying the coherence rank through the spatial coherence after a polarization projection. (a,b) From left to right: the source preparation; optimal interference fringes along the H and V polarization projections after the entropy converter in Fig.~\ref{fig:primes}(c), with W$_2$ oriented at $0^{\circ}$ with H; and optimal interference fringes along the H and V polarization projections with W$_2$ oriented at $45^{\circ}$. (a) Iso-entropy rank-2 and rank-3 fields with $S\!=\!0.75$. (b) Same as (a) for $S\!=\!1$.}
\label{fig:visibility}
\end{figure}

In other words, identifying an orthogonal pair of polarization projections that both yield a spatially coherent field indicates that the field is either rank-1 or rank-2. Identifying only a single polarization projection that yields a spatially coherent field indicates that the field is rank-3. There is \textit{no} polarization projection for a rank-4 field that yields a spatially coherent field.

We demonstrate these results experimentally in Fig.~\ref{fig:visibility} with pairs of iso-entropy rank-2 and rank-3 fields. After the field is prepared, it is directed through the entropy converter shown in Fig.~\ref{fig:primes}(c), and then the field is globally projected onto a prescribed polarization. We search for pairs of directions along which the resulting scalar field yields spatial interference fringes with $100\%$ visibility.

We start with a pair of fields at $S\!\approx\!0.75$ [Fig.~\ref{fig:visibility}(a)]. The rank-2 field is prepared by projecting the polarization at $45^{\circ}$ with respect to H and adjusting the amplitude of one spatial mode to obtain the targeted entropy (see Supplementary for the full coherence matrices associated with the fields in Fig.~\ref{fig:visibility}). After the entropy converter with W$_{2}$ oriented at $0^{\circ}$, no spatial interference fringes of high visibility are observed at any polarization projection. After setting W$_{2}$ at $45^{\circ}$, the polarization projections along H and V yield high-visibility spatial interference fringes, as expected for a rank-2 field.

We contrast these observations with those for an iso-entropy rank-3 field $S\!\approx\!0.75$. This field is prepared by projecting the polarization at $a$ alone along $45^{\circ}$ and adjusting the amplitude at $b$ to obtain the target entropy. After the entropy converter with W$_{2}$ oriented at $0^{\circ}$, no spatial interference fringes are observed at any polarization projection. However, after setting W$_{2}$ at $45^{\circ}$, projecting the polarization along H yields a field that produces high-visibility spatial interference fringes. The corresponding polarization projection along V does \textit{not} yield a spatially coherent field, and no interference fringes can be observed. We increase the field entropy for an iso-entropy pair of rank-2 and rank-3 fields to $S\!\approx\!1$ (the maximum entropy for rank-2) [Fig.~\ref{fig:visibility}(b)], and observe similar results to those for the lower-entropy fields [Fig.~\ref{fig:visibility}(a)]. Despite the higher entropy, we can still identify a pair of polarization projections for the rank-2 field that result in spatial coherence, whereas only a single polarization projection is identified for the rank-3 field.

\noindent\textbf{Discussion.} The approach outlined here in terms of coherence matrices \cite{Fano57RMP,Gamo64PO,Perina85book,Gori06OL} reveals features that are difficult to discern otherwise when extended to multiple DoFs. The analysis and experiments suggest a wealth of fundamental questions regarding the statistical behavior of optical fields: How does the rank vary spatially across a vector optical field? How does the spatial distribution of the rank evolve with free propagation? How is the coherence rank affected by optical nonlinearities? Although we have couched the coherence matrix here in terms of polarization and spatial modes, this description can be extended to other DoFs, including higher-dimensional DoFs (e.g., orbital angular momentum), and even continuous DoFs after implementing the Schmidt decomposition to obtain an effective finite-dimensional representation \cite{Ekert95AJP,Law00PRL,Law04PRL,Eberly06LP,Hall22JOSAA}. This is particularly relevant in light of recent realizations of optical fields in which the spatial, temporal, and polarization DoFs are all coupled \cite{Diouf21OE,yessenov2022space,Yessenov22AOP,Yessenov22OL}. In addition to the intrinsic interest of the coherence rank as a potential thermodynamic variable for electromagnetic fields, it may also serve as an integer identifier of the global properties of the field to be exploited for communications schemes using partially coherent light \cite{Nardi22OL}. 

In conclusion, we have presented a classification scheme of partially coherent optical fields based on the rank of the $4\times4$ coherence matrix for two binary DoFs. This classification unveils surprising structural distinctions: \textit{all} rank-2 fields are fundamentally separable whereas all rank-3 fields are intrinsically \textit{non}-separable. Consequently, the entropy in rank-2 fields -- no matter how high -- can always be concentrated into one DoF, thereby leaving the other DoF free of statistical fluctuations. In contrast, in a rank-3 field the entropy -- no matter how low -- cannot be fully concentrated into one DoF, and locked entropy remains associated with the other DoF.

\vspace{0.5cm}
\noindent
\begin{acknowledgments}
We thank C. Okoro, M. Yessenov, and A. Dogariu for useful discussions and assistance. This work was funded by the US Office of Naval Research (ONR) under contracts N00014-17-1-2458 and N00014-20-1-2789.
\end{acknowledgments}

\bibliography{main}

\clearpage

\end{document}


\title{
Supplementary: Locked entropy in partially coherent optical fields}



\begin{abstract}
We provide details of the experimental setup described in the main text, which is used for demonstrating entropy concentration and thus revealing the phenomenon of `locked entropy' in partially coherent optical fields. We also present all the theoretically expected coherence matrices corresponding to the experimental configurations reported in the main text. Finally, we offer proofs of the two theorems concerning the visibility of double-slit interference for optical fields of different coherence ranks.   
\end{abstract}

\maketitle

\section{Tomographic reconstruction of coherence matrices}

We make use of optical coherence matrix tomography (OCmT) to reconstruct the $4\times4$ coherence matrix $\mathbf{G}$ \cite{Abouraddy14OL,Kagalwala15SR}. The setup is shown in Fig.~2(a) in the main text. The outline of the OCmT procedure is as follows:
\begin{enumerate}
\item \textit{Optical setup.} The source is a light-emitting diode (LED) having a center wavelength of 625~nm, and a bandwidth (FWHM) of $\approx\!17$~nm. The spectral bandpass filter that has a center wavelength of 620~nm and a FWHM of 10~nm serves to improve the temporal coherence of the field by reducing the bandwidth. The field is restricted spatially via two vertically-oriented narrow slits to two spatial modes $a$ and $b$ [Fig.~\ref{fig:OCmT_supplemental}(a)]. The width of each slit is 100~$\mu$m, and they are separated by 23~mm. The separation distance is selected to be significantly larger than the spatial coherence width of the LED. A spherical lens of focal length $f\!=\!30$~mm is used to overlap the spatial modes from $a$ and $b$ at the Fourier plane, whereby interference fringes are observed if the field is spatially coherent. The polarizer provides linear projections along three directions: H (horizontal), V (vertical), and D (diagonal, $45^{\circ}$). In conjunction with a quarter wave plate, we also obtain a projection in the circular polarization basis (R).
 
\item \textit{The intensity measurements acquired for OCmT reconstruction.} The above-described setup provides the following intensity measurements: $I^{j}_{k}$, where $j\!=$ H, V, D, and R represents the polarization projection; and $k\!=\!a,b,$ $a+b$, and $a+ib$ are the spatial projections. The intensity corresponding to the spatial projection $a$ is obtained by blocking the slit $b$ and measuring the peak intensity at the center of the diffraction pattern; and similarly for the spatial projection $b$. The spatial projection $a+b$ is obtained by allowing light to pass through both slits and then measuring the intensity at the center of the interference pattern. The spatial projection $a+ib$ is the intensity at the location midway between the central peak and the first dip (minimum) of the interference pattern. The full list of acquired intensity measurements is provided in Table~\ref{Table:OCmT}.

\begin{table}[b!]
    \centering
    \caption{The tomographic intensity measurements necessary for  OCmT. H: horizontal, V: vertical, D: 45°, R: right-handed circular.}
    \begin{tabular}{|l|*{6}{c|}}
    \hline \multicolumn{1}{|c}{}& &\multicolumn{4}{c|}{\textbf{polarization projection}}\\
    \cline{3-6} \multicolumn{1}{|c}{}&\multicolumn{1}{c|}{} &H & V & D & R\\ 
    \hline \multirow{4}{*}{\centering{\textbf{spatial projection}}} &$a$ & $I^{\mathrm{H}}_{a}$ & $I^{\mathrm{V}}_{a}$ & $I^{\mathrm{D}}_{a}$ &$I^{\mathrm{R}}_{a}$ \\
    \cline{2-6} &$b$ & $I^{\mathrm{H}}_{b}$ & $I^{\mathrm{V}}_{b}$ & $I^{\mathrm{D}}_{b}$ & $I^{\mathrm{R}}_{b}$\\
    \cline{2-6} &$a+b$ & $I^{\mathrm{H}}_{(a+b)}$ & $I^{\mathrm{V}}_{(a+b)}$ & $I^{\mathrm{D}}_{(a+b)}$ & $I^{\mathrm{R}}_{(a+b)}$\\
    \cline{2-6} &$a+ib$ & $I^{\mathrm{H}}_{(a+ib)}$ & $I^{\mathrm{V}}_{(a+ib)}$ & $I^{\mathrm{D}}_{(a+ib)}$ & $I^{\mathrm{R}}_{(a+ib)}$ \\
    \hline
    \end{tabular}
    \label{Table:OCmT}
\end{table}

\item \textit{Calculating the multi-DoF Stokes parameters and reconstructing the coherence matrix.} When reconstructing the coherence matrix in polarimetry, the intensity measurements are used to calculate the polarization Stokes parameters. One may also define generalized Stokes parameters to encompass two DoFs in analogy to the Stokes parameters used to characterize two-photon quantum states \cite{Abouraddy02OptComm}. In terms of the multi-DoF Stokes parameters $S_{lm}$, where $l,m\!=\!0,1,2,3$, the multi-DoF coherence matrix $\mathbf{G}$ is given by:
\begin{equation}
\mathbf{G}=\frac{1}{4}\sum_{l,m=0}^{3} S_{lm} \hat{\sigma}^{p}_{l}\otimes\hat{\sigma}^{s}_{m},    
\end{equation}
where $\hat{\sigma}_{l}$ and $\hat{\sigma}_{m}$ represent the standard Pauli matrices. The coherence matrix $\mathbf{G}$ is thus given explicitly as follows:
\begin{equation}
    \centering    
    \mathbf{G}=
    \left(\begin{array}{cccc}
    S_{00} + S_{01} + S_{10} + S_{11} 
    & S_{02} + S_{12} - i(S_{03} + S_{13}) 
    & S_{20} + S_{21} - i(S_{30} + S_{31}) 
    & S_{22} - S_{33} - i(S_{23} + S_{32}\\
    S_{02} + S_{12} + i(S_{03} + S_{13}) 
    & S_{00} - S_{01} + S_{10} - S_{11} 
    & S_{22} - S_{33} + i(S_{23} - S_{32})
    & S_{20} - S_{21} - i(S_{30} - S_{31})\\
    S_{20} + S_{21} + i(S_{30} + S_{31})  
    & S_{22} + S_{33} - i(S_{23} - S_{32})  
    & S_{00} + S_{01} - S_{10} - S_{11} 
    & S_{02} - S_{12} - i(S_{03} - S_{13})\\
    S_{22} - S_{33} + i(S_{23} + S_{32})  
    & S_{20} - S_{21} + i(S_{30} - S_{31})  
    & S_{02} - S_{12} + i(S_{03} - S_{13})  
    & S_{00} - S_{01} - S_{10} + S_{11} \\
    \end{array}\right).
    \label{eq:G with Stokes}
    \end{equation} 

The multi-DoF Stokes parameters are defined in terms of intensity measurements as follows:
\begin{equation}
S_{lm}=4I_{lm}-2I_{0m}-2I_{l0}+I_{00},
\end{equation}
where the intensity values $I_{lm}$ are defined in terms of the measured intensity projections as follows:
\begin{eqnarray}
I_{00}&=&I^{\mathrm{H}}_{a}+I^{\mathrm{V}}_{a}+I^{\mathrm{H}}_{b}+I^{\mathrm{V}}_{b},\nonumber\\
I_{01}&=&I^{\mathrm{H}}_{a} + I^{\mathrm{H}}_{b},\nonumber\\
I_{02}&=& I^{\mathrm{D}}_{a} + I^{\mathrm{D}}_{b},\nonumber\\
I_{03}&=&I^{\mathrm{R}}_{a} + I^{\mathrm{R}}_{b},\nonumber\\
I_{10}&=&I^{\mathrm{H}}_{a} + I^{\mathrm{V}}_{a},\nonumber\\
I_{11}&=&I^{\mathrm{H}}_{a},\nonumber\\
I_{12}&=&I^{\mathrm{D}}_{a},\nonumber\\
I_{13}&=&I^{\mathrm{R}}_{a},\nonumber\\
I_{20}&=&I^{\mathrm{H}}_{a},\nonumber\\
I_{20}&=&(I^{\mathrm{H}}_{(a+b)}+I^{\mathrm{V}}_{(a+b)})/2,\nonumber\\
I_{21}&=&I^{\mathrm{H}}_{(a+b)}/2,\nonumber\\
I_{22}&=&I^{\mathrm{D}}_{(a+b)}/2,\nonumber\\
I_{23}&=&I^{\mathrm{R}}_{(a+b)}/2,\nonumber\\
I_{30}&=&(I^{\mathrm{H}}_{(a+ib)}+I^{\mathrm{V}}_{(a+ib)})/2,\nonumber\\
I_{31}&=&I^{\mathrm{H}}_{(a+ib)}/2,\nonumber\\
I_{32}&=&I^{\mathrm{D}}_{(a+ib)}/2,\nonumber\\
I_{33}&=&I^{\mathrm{R}}_{(a+ib)}/2.
\end{eqnarray}
\item The experimental matrices shown in Fig.~2 and Fig.~3 in the main text were processed by first diagonalizing the measured matrix. We set a noise threshold for the eigenvalues and eliminated diagonal elements below this threshold ($\lambda\!=\!0.06$). The diagonalized matrix was converted back to a $4\times4$ matrix using the same unitary transformation used in the diagonalization. Each matrix is normalized to its trace, and the entropies and fidelities were then calculated.
\end{enumerate}

\begin{figure}[t!]
\centering
\includegraphics[width=8.6cm]{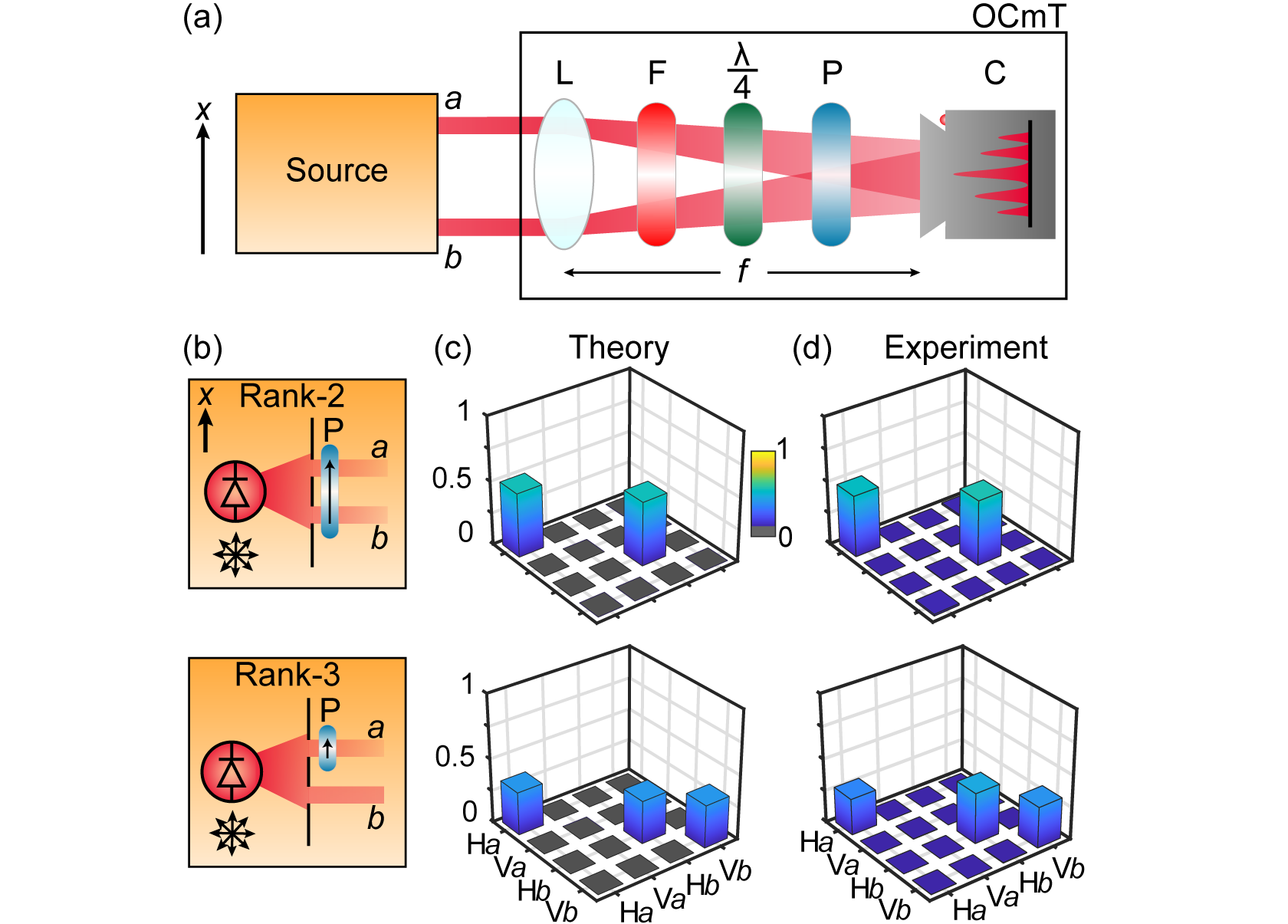}
\caption{Measuring rank-2 and rank-3 fields. (a) Schematic of the OCmT measurement scheme used to measure the coherence matrices; L: spherical lens (focal length $f\!=\!30$~mm); F: spectral filter; $\lambda/4$: quarter wave plate; P: linear polarizer; C: CMOS camera. (b) Source preparation for rank-2 and rank-3 fields. (c) The theoretical $4\times4$ coherence matrices for rank-2 and rank-3 fields. (d) The experimentally reconstructed $4\times4$ coherence matrices for rank-2 and rank-3 fields measured using OCmT.}
\label{fig:OCmT_supplemental}
\end{figure}

\section{Unitary transformation matrix for the entropy converter}

The unitary transformation associated with the entropy converter in Fig.~3(c) of the main text can be derived by cascading those for the half-wave plate (HWP) W$_{1}$ oriented at an angle $\theta_{1}$ with the H polarization, the polarizing beam splitter (PBS), and the HWP W$_{2}$ oriented at $\theta_{2}$. It is critical that the matrices for these unitaries be constructed over the full vector space of spatial modes \textit{and} polarization. The unitary transformation for a HWP oriented at an angle $\theta$ with H is thus given by:  
\begin{equation}\label{Eq:HWP in a}
U_{\mathrm{W}}(\theta)=\left(\begin{array}{cccc}
\cos2\theta&\sin2\theta&0&0\\\sin2\theta&-\cos2\theta&0&0\\
0&0&1&0\\0&0&0&1\end{array}\right),
\end{equation}
and that for the PBS (assumed to reflect the V polarization components) is given by \cite{Abouraddy17OE}:
\begin{equation}\label{Eq:PBS}
U_{\mathrm{PBS}}=\left(\begin{array}{cccc}
0&0&1&0\\0&1&0&0\\1&0&0&0\\0&0&0&1\end{array}\right).
\end{equation}
The unitary matrix $U$ for the entropy converter, which transforms the coherence matrices according to $\mathbf{G'}\!=\!{U}\mathbf{G}{U}^{\dag}$, is given by:

\begin{flalign}\label{Eq:transform_U}&U=U_{\mathrm{W}}(\theta_{2})U_{\mathrm{PBS}}U_{\mathrm{W}}(\theta_{1})\nonumber
\\
&=\left(\begin{array}{cccc}
\sin2\theta_{1}\sin2\theta_{2}&-\cos2\theta_{1}\cos2\theta_{2}&\cos2\theta_{2}&0\\
-\sin2\theta_{1}\cos2\theta_{2}&\cos2\theta_{1}\cos2\theta_{2}&\sin2\theta_{2}&0\\\cos2\theta_{1}&\sin2\theta_{1}&0&0\\0&0&0&1
\end{array}\right)
.
\end{flalign}    

When we set $\theta_{1}\!=\!45^{\circ}$, this matrix becomes:
\begin{equation}
U=\left(\begin{array}{cccc}
\sin2\theta_{2}&0&\cos2\theta_{2}&0\\
-\cos2\theta_{2}&0&\sin2\theta_{2}&0\\0&1&0&0\\0&0&0&1
\end{array}\right).
\end{equation}
The two settings of interest in our experiments are associated with $\theta_{2}\!=\!0^{\circ}$,
\begin{equation}\label{Eq:transform1U}
U=\left(\begin{array}{cccc}
0&0&1&0\\-1&0&0&0\\0&1&0&0\\0&0&0&1
\end{array}\right),
\end{equation}
and that associated with $\theta_{2}\!=\!45^{\circ}$
\begin{equation}\label{Eq:transform2_U}
U=\left(\begin{array}{cccc}
1&0&0&0\\0&0&1&0\\0&1&0&0\\0&0&0&1
\end{array}\right).
\end{equation}

\subsection{Theoretical coherence matrices}

We provide here the theoretical expectations for the coherence matrices corresponding to the fields in Fig.~3 of the main text. In general, the setting $\theta_{2}\!=\!0^{\circ}$ produces the following transformation to a diagonal matrix:
\begin{equation}
\mathrm{diag}\{\lambda_{1},\lambda_{2},\lambda_{3},\lambda_{4}\}\rightarrow\mathrm{diag}\{\lambda_{3},\lambda_{1},\lambda_{2},\lambda_{4}\}.
\end{equation}
For the rank-2 fields, this corresponds to:
\begin{equation}
\mathrm{diag}\{\lambda_{1},0,\lambda_{2},0\}\rightarrow\mathrm{diag}\{\lambda_{2},\lambda_{1},0,0\};
\end{equation}
and for rank-3 fields:
\begin{equation}
\mathrm{diag}\{\lambda_{1},0,\lambda_{2},\lambda_{3}\}\rightarrow\mathrm{diag}\{\lambda_{2},\lambda_{1},0,\lambda_{3}\}.
\end{equation}
Alternatively, the setting $\theta_{2}\!=\!45^{\circ}$ produces the following transformation to a diagonal matrix:
\begin{equation}
\mathrm{diag}\{\lambda_{1},\lambda_{2},\lambda_{3},\lambda_{4}\}\rightarrow\mathrm{diag}\{\lambda_{1},\lambda_{3},\lambda_{2},\lambda_{4}\}.
\end{equation}
For the rank-2 fields, this corresponds to:
\begin{equation}
\mathrm{diag}\{\lambda_{1},0,\lambda_{2},0\}\rightarrow\mathrm{diag}\{\lambda_{1},\lambda_{2},0,0\};
\end{equation}
and for rank-3 fields:
\begin{equation}
\mathrm{diag}\{\lambda_{1},0,\lambda_{2},\lambda_{3}\}\rightarrow\mathrm{diag}\{\lambda_{1},\lambda_{2},0,\lambda_{3}\}.
\end{equation}

The rank-2 field in Fig.~3(d) with targeted entropy $S\!=\!0.75$ has the coherence matrix:
\begin{equation}
\mathbf{G}=\mathrm{diag}\{0.7855,0,0.2145,0\},
\end{equation}
corresponding to $S\!=\!0.75$, $S_{\mathrm{s}}\!=\!0.75$, and $S_{\mathrm{p}}\!=\!0$. With $\theta_{2}\!=\!0^{\circ}$ for W$_2$, the coherence matrix becomes:
\begin{equation}
\mathbf{G}'=\mathrm{diag}\{0.2145,0.7855,0,0\},
\end{equation}
corresponding to $S\!=\!0.75$, $S_{\mathrm{s}}\!=\!0$, and $S_{\mathrm{p}}\!=\!0.75$.

After the entropy converter with $\theta_{2}\!=\!45^{\circ}$ for W$_2$, the coherence matrix becomes:
\begin{equation}
\mathbf{G}'=\mathrm{diag}\{0.7855,0.2145,0,0\},
\end{equation}
corresponding to $S\!=\!0.75$, $S_{\mathrm{s}}\!=\!0$, and $S_{\mathrm{p}}\!=\!0.75$.

For the rank-3 field in Fig.~3(d) with targeted entropy $S\!=\!0.75$, the field has the coherence matrix:
\begin{equation}
\mathbf{G}=\mathrm{diag}\{0.8528,0,0.0736,0.0736\},
\end{equation}
corresponding to $S\!=\!0.75$, $S_{\mathrm{s}}\!=\!0.60$, and $S_{\mathrm{p}}\!=\!0.38$. With $\theta_{2}\!=\!0^{\circ}$ for W$_2$, the coherence matrix becomes:
\begin{equation}
\mathbf{G}'=\mathrm{diag}\{0.0736,0.8528,0,0.0736\},
\end{equation}
corresponding to $S\!=\!0.75$, $S_{\mathrm{s}}\!=\!0.38$, and $S_{\mathrm{p}}\!=\!0.38$.

After the entropy converter with $\theta_{2}\!=\!45^{\circ}$ for W$_2$, the coherence matrix becomes:
\begin{equation}
\mathbf{G}'=\mathrm{diag}\{0.8528,0.0736,0,0.0736\},
\end{equation}
corresponding to $S\!=\!0.75$, $S_{\mathrm{s}}\!=\!0.38$, and $S_{\mathrm{p}}\!=\!0.60$.

The rank-2 field in Fig.~3(e) of the main text with targeted entropy $S\!=\!1$ has the coherence matrix:
\begin{equation}
\mathbf{G}=\mathrm{diag}\{0.5,0,0.5,0\},
\end{equation}
corresponding to $S\!=\!1$, $S_{\mathrm{s}}\!=\!1$, and $S_{\mathrm{p}}\!=\!0$. After the entropy converter with $\theta_{2}\!=\!0^{\circ}$ for W$_2$, the coherence matrix becomes:
\begin{equation}
\mathbf{G}'=\mathrm{diag}\{0.5,0.5,0,0\},
\end{equation}
corresponding to $S\!=\!1$, $S_{\mathrm{s}}\!=\!0$, and $S_{\mathrm{p}}\!=\!1$. With $\theta_{2}\!=\!45^{\circ}$ for W$_2$, the coherence matrix becomes:
\begin{equation}
\mathbf{G}'=\mathrm{diag}\{0.5,0.5,0,0\},
\end{equation}
corresponding to $S\!=\!1$, $S_{\mathrm{s}}\!=\!0$, and $S_{\mathrm{p}}\!=\!1$.

The rank-3 field in Fig.~3(e) of the main text with targeted entropy $S\!=\!1$ has the coherence matrix:
\begin{equation}
\mathbf{G}=\mathrm{diag}\{0.7730,0,0.1135,0.1135\},
\end{equation}
corresponding to $S\!=\!1$, $S_{\mathrm{s}}\!=\!0.77$, and $S_{\mathrm{p}}\!=\!0.51$. After the entropy converter with $\theta_{2}\!=\!0^{\circ}$ for W$_2$, the coherence matrix becomes:
\begin{equation}
\mathbf{G}'=\mathrm{diag}\{0.1135,0.7730,0,0.1135\},
\end{equation}
corresponding to $S\!=\!1$, $S_{\mathrm{s}}\!=\!0.51$, and $S_{\mathrm{p}}\!=\!0.51$. With $\theta_{2}\!=\!45^{\circ}$ for W$_2$, the coherence matrix becomes:
\begin{equation}
\mathbf{G}'=\mathrm{diag}\{0.7730,0.1135,0,0.1135\},
\end{equation}
corresponding to $S\!=\!1$, $S_{\mathrm{s}}\!=\!0.51$, and $S_{\mathrm{p}}\!=\!0.77$.

Finally, the rank-3 field in Fig.~3(f) of the main text has the coherence matrix:
\begin{equation}
\mathbf{G}=\frac{1}{3}\mathrm{diag}\{1,0,1,1\},
\end{equation}
corresponding to $S\!=\!1.585$, $S_{\mathrm{s}}\!=\!0.92$, and $S_{\mathrm{p}}\!=\!0.92$. With $\theta_{2}\!=\!0^{\circ}$ for W$_2$, the coherence matrix becomes:
\begin{equation}
\mathbf{G}'=\frac{1}{3}\mathrm{diag}\{1,1,0,1\},
\end{equation}
corresponding to $S\!=\!1.585$, $S_{\mathrm{s}}\!=\!0.92$, and $S_{\mathrm{p}}\!=\!0.92$.

 After the entropy converter with $\theta_{2}\!=\!45^{\circ}$ for W$_2$, the coherence matrix becomes:
\begin{equation}
\mathbf{G}'=\frac{1}{3}\mathrm{diag}\{1,1,0,1\},
\end{equation}
corresponding to $S\!=\!1.585$, $S_{\mathrm{s}}\!=\!0.92$, and $S_{\mathrm{p}}\!=\!0.92$.

\subsection{Theoretical coherence matrices for the double-slit interference measurements}

We present here the coherence matrices associated with the fields used in Fig.~4 of the main text.

The rank-2 field in Fig.~4(a) with $S\!=\!0.75$ has the coherence matrix:
\begin{equation}
\mathbf{G}=\left(\begin{array}{cccc}
0.3927&0.3927&0&0\\
0.3927&0.3927&0&0\\
0&0&0.1073&0.1073\\
0&0&0.1073&0.1073
\end{array}\right),
\end{equation}
corresponding to $S\!=\!0.75$, $S_{\mathrm{s}}\!=\!0.75$, and $S_{\mathrm{p}}\!=\!0$. To realize these values, the power at $b$ is adjusted via a neutral density filter to be $\approx\!23\%$ that of the power at $a$. After the entropy converter with W$_2$ oriented at $0^{\circ}$ is:
\begin{equation}
\mathbf{G}'=\left(\begin{array}{cccc}
0.1073&0&0&0.1073\\
0&0.3927&0.3927&0\\
0&0.3927&0.3927&0\\
0.1073&0&0&0.1073
\end{array}\right),
\end{equation}
corresponding to $S\!=\!0.75$, $S_{\mathrm{s}}\!=\!1$, and $S_{\mathrm{p}}\!=\!1$. When W$_{2}$ is oriented at $45^{\circ}$, the coherence matrix is:
\begin{equation}
\mathbf{G}'=\left(\begin{array}{cccc}
0.3927&0&0.3927&0\\
0&0.1073&0&0.1073\\
0.3927&0&0.3927&0\\
0&0.1073&0&0.1073
\end{array}\right),
\end{equation}
corresponding to $S\!=\!0.75$, $S_{\mathrm{s}}\!=\!0$, and $S_{\mathrm{p}}\!=\!0.75$.

The rank-3 field in Fig.~4(a) is produced after adjusting the power at $b$ to be $\approx\!15\%$ that at $a$, and is associated with the coherence matrix:
\begin{equation}
\mathbf{G}=\left(\begin{array}{cccc}
0.4264&0.4264&0&0\\
0.4264&0.4264&0&0\\
0&0&0.0736&0\\
0&0&0&0.0736
\end{array}\right),
\end{equation}
corresponding to $S\!=\!0.75$, $S_{\mathrm{s}}\!=\!0.60$, and $S_{\mathrm{p}}\!=\!0.38$. After the entropy converter with W$_2$ oriented at $0^{\circ}$, the coherence matrix is:
\begin{equation}
\mathbf{G}'=\left(\begin{array}{cccc}
0.0736&0&0&0\\
0&0.4264&0.4264&0\\
0&0.4264&0.4264&0\\
0&0&0&0.0736
\end{array}\right),
\end{equation}
corresponding to $S\!=\!0.75$, $S_{\mathrm{s}}\!=\!1$, and $S_{\mathrm{p}}\!=\!1$. When W$_2$ is oriented at $45^{\circ}$, the coherence matrix is:
\begin{equation}
\mathbf{G}'=\left(\begin{array}{cccc}
0.4264&0&0.4264&0\\
0&0.0736&0&0\\
0.4264&0&0.4264&0\\
0&0&0&0.0736
\end{array}\right),
\end{equation}
corresponding to $S\!=\!0.75$, $S_{\mathrm{s}}\!=\!0.38$, and $S_{\mathrm{p}}\!=\!0.60$.

The rank-2 field in Fig.~4(b) with $S\!=\!1$ is prepared with equal power at $a$ and $b$, and is associated with the coherence matrix: 
\begin{equation}
\mathbf{G}=\frac{1}{4}\left(\begin{array}{cccc}
1&1&0&0\\
1&1&0&0\\
0&0&1&1\\
0&0&1&1
\end{array}\right),
\end{equation}
corresponding to $S\!=\!1$, $S_{\mathrm{s}}\!=\!1$, and $S_{\mathrm{p}}\!=\!0$. After the entropy converter with W$_2$ oriented at $0^{\circ}$, the coherence matrix is:
\begin{equation}
\mathbf{G}'=\frac{1}{4}\left(\begin{array}{cccc}
1&0&0&1\\
0&1&1&0\\
0&1&1&0\\
1&0&0&1
\end{array}\right),
\end{equation}
corresponding to $S\!=\!1$, $S_{\mathrm{s}}\!=\!1$, and $S_{\mathrm{p}}\!=\!1$. In this case $S\!=\!S_{\mathrm{s}}\!=\!S_{\mathrm{p}}$, and the reduced coherence matrices are: ${\mathbf{G}_\mathrm{s}\!=\!\mathbf{G}_\mathrm{p}\!=\!\begin{pmatrix}0.5&0\\0&0.5\\\end{pmatrix}}$. After the entropy converter with W$_2$ oriented at $45^{\circ}$, the coherence matrix is:
\begin{equation}
\mathbf{G}'=\frac{1}{4}\left(\!\begin{array}{cccc}
1&0&1&0\\
0&1&0&1\\
1&0&1&0\\
0&1&0&1
\end{array}\right),
\end{equation}
corresponding to $S\!=\!1$, $S_{\mathrm{s}}\!=\!0$, and $S_{\mathrm{p}}\!=\!1$.

The rank-3 field in Fig.~4(b) is prepared by adjusting the power at $b$ to be $\approx\!33\%$ of that at $a$, and is associated with the coherence matrix:
\begin{equation}
\mathbf{G}=\left(\begin{array}{cccc}
0.3865&0.3865&0&0\\
0.3865&0.3865&0&0\\
0&0&0.1135&0\\
0&0&0&0.1135
\end{array}\right),
\end{equation}
corresponding to $S\!=\!1$, $S_{\mathrm{s}}\!=\!0.77$, and $S_{\mathrm{p}}\!=\!0.51$. After the entropy converter with W$_2$ oriented at $0^{\circ}$, the coherence matrix is:
\begin{equation}
\mathbf{G}'=\left(\begin{array}{cccc}
0.1135&0&0&0\\
0&0.3865&0.3865&0\\
0&0.3865&0.3865&0\\
0&0&0&0.1135
\end{array}\right),
\end{equation}
corresponding to $S\!=\!1$, $S_{\mathrm{s}}\!=\!1$, and $S_{\mathrm{p}}\!=\!1$. In this case, $S\!=\!S_{\mathrm{s}}\!=\!S_{\mathrm{p}}$, with the reduced coherence matrices $\mathbf{G}_\mathrm{s}\!=\!\mathbf{G}_\mathrm{p}\!=\!\begin{pmatrix}0.5&0\\0&0.5\\\end{pmatrix}$. With W$_2$ oriented at $45^{\circ}$, the coherence matrix is:
\begin{equation}
\mathbf{G}'=\left(\begin{array}{cccc}
0.3865&0&0.3865&0\\
0&0.1135&0&0\\
0.3865&0&0.3865&0\\
0&0&0&0.1135
\end{array}\right),
\end{equation}
corresponding to $S\!=\!1$, $S_{\mathrm{s}}\!=\!0.51$, and $S_{\mathrm{p}}\!=\!0.77$. 

\section{Theoretical results}

We prove here the two theorems described in the main text in which the rank of $\mathbf{G}$ is related to polarization projections that yield a spatially coherent scalar field \cite{Kelkar:19}.

\begin{theorem}\label{thm:rank3}
For a vector optical field supported on two spatial points with a coherence matrix $\mathbf{G}$, if there exists a polarization projection along vector $\mathbf{P}$ along which the field is spatially coherent (i.e., it can produce spatial interference fringes with $100\%$ visibility), then $\mathcal{R}(\mathbf{G})\leq3$. 
\end{theorem}

\begin{proof}

We consider the polarization vector $\mathbf{P}\!=\!\left(\begin{array}{c}\cos\theta\\e^{i\phi}\sin\theta\end{array}\right)$ with $\theta\in[0,\pi/2]$ and $\phi\in[0,2\pi)$, and take $\mathbf{Q}$ to be the orthogonal polarization vector. The block-diagonal unitary transformation $U$ given by:
\begin{equation}
U=I\otimes\begin{pmatrix}
\cos\theta & e^{i\phi}\sin\theta\\e^{-i\phi}\sin\theta & -\cos\theta\end{pmatrix},
\end{equation}
transforms the field from the polarization basis defined by H and V to that defined by $\mathbf{Q}$ and $\mathbf{P}$, $\mathbf{u}_{\mathrm{PQ}}\!=\!U\mathbf{u}_{\mathrm{HV}}$;
here $\mathbf{u}_{\mathrm{HV}}\!=\![E_a^{\mathrm{H}} ~~ E_a^{\mathrm{V}} ~~ E_b^{\mathrm{H}} ~~ E_b^{\mathrm{V}}]^\top$ and $\mathbf{u}_{\mathrm{PQ}} = [E_a^{\mathrm{P}} ~~ E_a^{\mathrm{Q}} ~~ E_b^{\mathrm{P}} ~~ E_b^{\mathrm{Q}}]^\top$ are vector fields defined in these two polarization bases.

The coherence matrix in the PQ polarization basis is:
\begin{eqnarray}
\boldsymbol{\Gamma} &=&\langle \mathbf{u}_{PQ}\mathbf{u}_{PQ}^\dagger \rangle = U\langle\mathbf{u}_{HV}\mathbf{u}_{HV}^\dagger \rangle U^\dagger=U\mathbf{G}U^\dagger\nonumber\\
&=& \left(\begin{array}{cccc}
\Gamma_{aa}^{PP}&\Gamma_{aa}^{PQ}&\Gamma_{ab}^{PP}&\Gamma_{ab}^{PQ}\\
\Gamma_{aa}^{QP}&\Gamma_{aa}^{QQ}&\Gamma_{ab}^{QP}&\Gamma_{ab}^{QQ}\\
\Gamma_{ba}^{PP}&\Gamma_{ba}^{PQ}&\Gamma_{bb}^{PP}&\Gamma_{bb}^{PQ}\\
\Gamma_{ba}^{QP}&\Gamma_{ba}^{QQ}&\Gamma_{bb}^{QP}&\Gamma_{bb}^{QQ}\\\end{array}\right),\label{eqn:Gamma}
\end{eqnarray}
where $\Gamma_{kl}^{mn}=\langle E_{k}^{m}E_{l}^{n*}\rangle$, $m,n=P,Q$ and $k,l=a,b$, and $\langle\cdot\rangle$ denotes the ensemble average or expectation.

When $E_a^P$ and $E_b^P$ are mutually coherent, then there exists a complex constant $\alpha$ whereby $E_a^P=\alpha E_b^P$, which lead to: $\Gamma_{ka}^{mP}=\alpha^* \Gamma_{kb}^{mP}$, for $k = a,b$ and $m = P,Q$. This implies that columns 1 and 3 of $\boldsymbol{\Gamma}$ are linearly dependent. Therefore, $\mathcal{R}(\boldsymbol{\Gamma})\leq3$. Recall that $\boldsymbol{\Gamma}$ in the $PQ$ polarization basis is related to $\mathbf{G}$ in the HV basis via a unitary transformation $\boldsymbol{\Gamma} = U\mathbf{G}U^\dagger$ that does not change the rank. Therefore, we have $\mathcal{R}(\mathbf{G})\leq3$.
\end{proof}

\begin{theorem}\label{thm:rank2}
For a vector optical field supported on two points with a coherence matrix $\mathbf{G}$, if there exist two orthogonal polarization projections $\mathbf{P}$ and $\mathbf{Q}$ along which the field is spatially coherent (i.e., it can produce spatial interference fringes with $100\%$ visibility), then $\mathcal{R}(\mathbf{G})\leq2$. 
\end{theorem}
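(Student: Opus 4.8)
The plan is to build directly on the proof of Theorem~\ref{thm:rank3}, using the extra coherence hypothesis to eliminate a \emph{second} pair of columns. First I would fix the orthonormal polarization pair $\{\mathbf{P},\mathbf{Q}\}$ and apply the same block-diagonal unitary $U$ used there, which rotates the HV polarization basis into the PQ basis, so that $\boldsymbol{\Gamma}=U\mathbf{G}U^\dagger$ acquires the block form of Eq.~\eqref{eqn:Gamma}. Since $U$ is unitary and the rank is invariant under unitary conjugation, $\mathcal{R}(\boldsymbol{\Gamma})=\mathcal{R}(\mathbf{G})$; hence it suffices to bound $\mathcal{R}(\boldsymbol{\Gamma})$.

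Next I would invoke the column-dependence argument of Theorem~\ref{thm:rank3} twice, once for each projection. Spatial coherence along $\mathbf{P}$ with $100\%$ visibility supplies a complex constant $\alpha$ with $E_a^P=\alpha E_b^P$, so that $\Gamma_{ka}^{mP}=\alpha^*\Gamma_{kb}^{mP}$ for $k=a,b$ and $m=P,Q$; that is, column~1 of $\boldsymbol{\Gamma}$ equals $\alpha^*$ times column~3. Spatial coherence along the orthogonal projection $\mathbf{Q}$ supplies, by exactly the same reasoning, a complex constant $\beta$ with $E_a^Q=\beta E_b^Q$, so that $\Gamma_{ka}^{mQ}=\beta^*\Gamma_{kb}^{mQ}$, i.e.\ column~2 of $\boldsymbol{\Gamma}$ equals $\beta^*$ times column~4. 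Both hypotheses refer to the \emph{same} fixed matrix $\mathbf{G}$, so the two column relations hold simultaneously.

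Then I would conclude: columns~1 and~2 of $\boldsymbol{\Gamma}$ are scalar multiples of columns~3 and~4 respectively, so the column space of $\boldsymbol{\Gamma}$ is spanned by columns~3 and~4 alone, giving $\mathcal{R}(\boldsymbol{\Gamma})\leq2$, and therefore $\mathcal{R}(\mathbf{G})=\mathcal{R}(\boldsymbol{\Gamma})\leq2$.

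I do not expect a genuine obstacle here; the argument is essentially a doubling of Theorem~\ref{thm:rank3}, and the substantive content — that $100\%$ fringe visibility forces the deterministic proportionality $E_a^{\mathbf{P}}=\alpha E_b^{\mathbf{P}}$ — has already been used and accepted in Theorem~\ref{thm:rank3}. The one point requiring a word of care is the degenerate situation in which a projection carries no light at point $b$, say $E_b^P\equiv0$, so that no finite constant $\alpha$ exists: there one observes directly that every entry of column~3 of $\boldsymbol{\Gamma}$ carries the factor $E_b^{P*}$ and hence vanishes, so column~3 is the zero column and the rank bound is untouched; the same holds symmetrically for $E_b^Q$.
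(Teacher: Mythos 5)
Your proposal is correct and follows essentially the same route as the paper: transform to the PQ basis via the block-diagonal unitary, apply the proportionality $E_a^P=\alpha E_b^P$ and $E_a^Q=\beta E_b^Q$ to show columns 1 and 3, and columns 2 and 4, of $\boldsymbol{\Gamma}$ are linearly dependent, then use unitary invariance of rank. Your added remark on the degenerate case $E_b^P\equiv0$ (and your correct use of $\beta^*$ rather than the paper's typographical $\alpha^*$ in the second column relation) is a small tidying of the same argument, not a different one.
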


\begin{proof}
Similarly to the proof of Theorem~\ref{thm:rank3}, there exist complex-valued constants $\alpha$ and $\beta$, such that $E_a^P=\alpha E_b^P$ and $E_a^Q=\beta E_b^Q$, and the the coherence matrix $\Gamma$ in the PQ polarization basis is related by a unitary transformation $U$ to the coherence matrix $\mathbf{G}$ in the HV basis: $\Gamma = U\mathbf{G}U^\dagger$. Therefore, $\Gamma_{ka}^{mP}=\alpha^*\Gamma_{kb}^{mP}$ and $\Gamma_{ka}^{mQ}=\alpha^*\Gamma_{kb}^{mQ}$, for $k=a,b$ and $m=P,Q$, which implies that columns 1 and 3 of $\boldsymbol{\Gamma}$ are linearly dependent, and so are columns 2 and 4. Therefore we have $\mathcal{R}(\mathbf{G})=\mathcal{R}(\boldsymbol{\Gamma})\leq2$.

\end{proof}

Note, however, that the converses of these two theorems are not true. However, for the specific case of the theoretical matrix of a rank-3 field used in Fig.~4 of the main text, the converse of Theorem \ref{thm:rank3} holds. This is because the first and third columns of $\mathbf{G}_3'$ are equal. Therefore, for the optical field $\mathbf{E} = [ E_a^H \quad E_a^V \quad E_b^H \quad E_b^V ]^\top$ corresponding to $\mathbf{G}_3'$,
\begin{align}
    \langle E_a^H E_a^{H*} \rangle = \langle E_a^H E_b^{H*} \rangle = \langle E_b^H E_b^{H*} \rangle,
\end{align}
which implies that $E_a^H$ and $E_b^H$ are maximally correlated. We also have $\langle E_a^V E_b^{V*} \rangle = 0$, which implies that $E_a^V$ and $E_b^V$ are incoherent. Therefore, for the specific case of $\mathbf{G}_3'$, it is possible to find a polarization projection that renders the field at two points coherent. Additionally, the field at these two points is incoherent along the orthogonal polarization projection. Similarly, for the specific case of the rank-2 field used in Fig.~4 of the main text. 
It can be shown that the field is coherent along two orthogonal polarization projections, H and V.

\bibliography{supp}